\DeclareRobustCommand{\SkipTocEntry}[5]{}
\definecolor{blue}{rgb}{.255,.41,.884} 
\definecolor{red}{rgb}{1, 0, 0} 
\definecolor{green}{rgb}{.196,.804,.196} 
\definecolor{yellow}{rgb}{1,.648,0} 
\definecolor{pink}{rgb}{1,0.5,0.5}
\newtheorem{theorem}{Theorem}[section]
\newtheorem{lemma}[theorem]{Lemma}
\theoremstyle{definition}
\newtheorem{example}[theorem]{Example}
\theoremstyle{remark}
\newtheorem{remark}[theorem]{Remark}
\newcommand{\be}{\begin{equation}}
\newcommand{\ee}{\end{equation}}
\newcommand{\ba}{\begin{array}}
\newcommand{\ea}{\end{array}}
\newcommand{\beq}{\begin{eqnarray}}
\newcommand{\eeq}{\end{eqnarray}}
\newtheorem{lm}{lemma}
\newtheorem{thee}{theorem}
\newtheorem{proo}{proposition}
\newtheorem{co}{corollary}
\newtheorem{rem}{remark}
\newtheorem{deff}{definition}
\newcommand{\bd}{\begin{deff}}
\newcommand{\ed}{\end{deff}}
\newcommand{\bl}{\begin{lm}}
\newcommand{\el}{\end{lm}}
\newcommand{\bp}{\begin{proo}}
\newcommand{\ep}{\end{proo}}
\newcommand{\bt}{\begin{thee}}
\newcommand{\et}{\end{thee}}
\newcommand{\bc}{\begin{co}}
\newcommand{\ec}{\end{co}}
\newcommand{\brm}{\begin{rem}}
\newcommand{\erm}{\end{rem}}
\def\frak{\mathfrak}
\newcommand{\newc}{\newcommand}
\renewcommand{\exp}{\operatorname{exp}}
\let\ccdot.
\newc{\aR}{\mbox{\boldmath{$ R$}}}
\newc{\aS}{\mbox{\boldmath{$ S$}}}
\newc{\aT}{\mbox{\boldmath{$ T$}}}
\newc{\aW}{\mbox{\boldmath{$ W$}}}
\newc{\aD}{\mbox{\boldmath{$ D$}}\hspace{-.2mm}}
\newc{\aK}{\mbox{\boldmath{$ K$}}}
\newc{\aL}{\mbox{\boldmath{$ L$}}}
\newcommand{\nn}[1]{(\ref{#1})}
\newc{\obstrn}[2]{B^{#1}_{#2}}
\newcommand{\rpl}                         
{\mbox{$
\begin{picture}(12.7,8)(-.5,-1)
\put(0,0.2){$+$}
\put(4.2,2.8){\oval(8,8)[r]}
\end{picture}$}}
\newcommand{\lpl}                         
{\mbox{$
\begin{picture}(12.7,8)(-.5,-1)
\put(2,0.2){$+$}
\put(6.2,2.8){\oval(8,8)[l]}
\end{picture}$}}
\newc{\tensor}[1]{#1}
\newc{\Mvariable}[1]{\mbox{#1}}
\newc{\down}[1]{{}_{#1}}
\newc{\up}[1]{{}^{#1}}
\newc{\JulyStrut}{\rule{0mm}{6mm}}
\newc{\midtenPan}{\mbox{\sf S}}
\newc{\midten}{\mbox{\sf T}}
\newc{\midtenEi}{\mbox{\sf U}}
\newc{\ATen}{\mbox{\sf E}}
\newc{\BTen}{\mbox{\sf F}}
\newc{\CTen}{\mbox{\sf G}}
\def\sideremark#1{\ifvmode\leavevmode\fi\vadjust{\vbox to0pt{\vss
 \hbox to 0pt{\hskip\hsize\hskip1em
 \vbox{\hsize2cm\tiny\raggedright\pretolerance10000
  \noindent #1\hfill}\hss}\vbox to8pt{\vfil}\vss}}}
\numberwithin{equation}{section}
\newcommand{\hh}{{\hspace{.3mm}}}
\newcommand{\sss}{\scriptscriptstyle}
\renewcommand\geq{\geqslant}
\renewcommand\leq{\leqslant}
\newcommand{\Partial}[1]{\frac{\partial}{\partial #1}}
 \newcommand{\bdot }{{\mathop{\lower0.33ex\hbox{\LARGE$\cdot$}}}}
\newcommand{\superimpose}[2]{%
  {\ooalign{$#1\@firstoftwo#2$\cr\hfil$#1\@secondoftwo#2$\hfil\cr}}}
\begin{document}

\renewcommand{\today}{}
\title{
{
Contact Quantization: \\[1mm] {\small Quantum Mechanics = Parallel transport 
 }}}
\author{ G.~Herczeg${}^\sharp$, E.~Latini${}^\flat$ \&  Andrew Waldron${}^\natural$}

\address{${}^\sharp$
Department of  Physics,
  University of California,
  Davis, CA 95616, USA}\email{Herczeg@ms.physics.ucdavis.edu}

\address{${}^\flat$ 
 Dipartimento di Matematica, Universit\`a di Bologna, Piazza di Porta S. Donato 5,
 and  INFN, Sezione di Bologna, Via Irnerio 46, I-40126 Bologna,  Italy}
 \email{emanuele.latini@UniBo.it}

  \address{${}^{\natural}$
  Center for Quantum Mathematics and Physics (QMAP)\\
  Department of Mathematics\\ 
  University of California\\
  Davis, CA95616, USA} \email{wally@math.ucdavis.edu}

\vspace{10pt}

\renewcommand{\arraystretch}{1}

%
%
%
%

%
%
%
%
%
%


\begin{abstract}
\noindent

Quantization together with quantum dynamics can be simultaneously formulated as the problem of finding an appropriate flat connection on a Hilbert bundle over a contact manifold. 
Contact geometry treats time, generalized positions and momenta as points on an underlying phase-spacetime and reduces classical mechanics to contact topology.
{\it Contact quantization} describes quantum dynamics in terms of parallel transport for a flat connection; the ultimate goal being to also handle quantum systems in terms of contact topology. 
Our main result is a  proof of local, formal gauge equivalence for a broad class of quantum dynamical systems---just as classical dynamics depends on choices of clocks, local quantum dynamics 
can be reduced to a problem of studying gauge transformations. We
further 
show how to write quantum correlators in terms of parallel transport and in turn matrix elements for Hilbert bundle gauge transformations,
and give the path integral formulation of these results. Finally, we show how to relate topology of the underlying contact manifold to boundary conditions for quantum wave functions.

 \end{abstract}


\maketitle

\pagestyle{myheadings} \markboth{Herczeg, Latini \& Waldron}{Contact Quantization}


\tableofcontents

\section{Introduction}

To understand why a study of contact geometry is fundamental to quantum mechanics, it is useful to think about the standard Copenhagen intepretation 
in a novel way: According to the  Copenhagen interpretation, one prepares an initial quantum state, allows it to evolve for some time, and then calculates the probability of observing some choice of final state.
The basic data here is a Hilbert space
and a one parameter family of unitary operators that determine time evolution. 
This parameter typically corresponds to time intervals as measured in a classical laboratory. Two modifications of this standard paradigm will lead us to a---rather propitious---reformulation of quantum mechanics as a theory of flat connections on a Hilbert bundle over a contact manifold: 

\begin{enumerate}[(i)]
\item
Because it ought be possible to describe quantum dynamics for any choice of laboratory time coordinate (for example one may conceive of notions of time that mix varying combinations of classical-laboratory measurements), we replace the time interval with a classical ``phase-spacetime'' manifold~$Z$, which can be thought of as a classical phase space augmented by a timelike direction that enjoys general coordinate (diffeomorphism) invariance.
\medskip
\item
Instead of viewing quantum dynamics as trajectories in a single given Hilbert space~${\mathcal H}$,
 we associate---in a manner reminiscent of  gauge theories and general relativity---a copy of the Hilbert space to every point in the phase-spacetime $Z$. This structure is a {\it Hilbert bundle}~$Z\ltimes{\mathcal H}$, {\it viz.} a vector bundle whose fibers are Hilbert spaces~\cite{Dupre}. We use the warped product notation $Z\ltimes{\mathcal H}$ to indicate that, locally in $Z$, the Hilbert bundle is a direct product, although this need not globally be the case.
\end{enumerate}

Given the  geometric data of the vector bundle $Z\ltimes {\mathcal H}$, we wish to compare Hilbert space states at distinct points in $Z$. For that we need a connection $\nabla$. Concretely 
$$
\nabla = d + \widehat A\, ,
$$
where $d$ is the exterior derivative on $Z$ and $i\widehat A$ is a one-form taking values in the space of hermitean operators on ${\mathcal H}$. In particular, if ${\mathcal H}$ is simply $L^2({\mathbb R}^n)$, we may consider $\widehat A$ to take values in the self-adjoint subspace of the corresponding Weyl algebra. 

To construct the connection $\nabla$, additional data is required. In Section~\ref{II}, we will show that giving the phase-spacetime manifold a strict contact stucture
endows the Hilbert bundle $Z\ltimes{\mathcal H}$ with a flat connection. Physically, this strict contact data corresponds to specifying classical dynamics on  $Z$. The construction we give is partly motivated by earlier BRST studies of Fedosov quantization~\cite{Fed} for symplectic manifolds~\cite{BRSTFed}.
Solutions to the  quantum Schr\"odinger equation are then 
 parallel sections of the Hilbert bundle---quantum dynamics amounts to parallel transport of states from one Hilbert space fiber to another. The main theorem of 
of Section~\ref{II} establishes that solutions for connections obeying the flatness condition are locally and formally gauge equivalent. The method of proof is close to that employed in Fedosov's original work on deformation quantization of Poisson structures~\cite{Fed}. The key advantage is that our contact approach not only incorporates dynamics, but also establishes a very general local gauge equivalence between dynamical quantum systems.

In Section~\ref{dynamics}, we 
focus on the description of dynamics in terms of parallel sections of the Hilbert bundle. 
In particular we show how to reduce the problem of computing quantum correlators to that of finding the matrix element of a gauge transformation. We also give a path integral description of correlators in terms of  paths in  a novel extended phase-spacetime description of contact Reeb dynamics. We also show how topology of the underlying contact manifold determines boundary conditions for quantum wavefunctions.
 Open problems and future prospects are discussed in Section~\ref{discuss}.

\section{Strict Contact Structures and Quantization}\label{II}

Contact geometry may be viewed as a unification of Hamiltonian dynamics and symplectic geometry. Therefore, before discussing quantization, we introduce the salient features of contact structures~\cite{Geiges,Rajeev}.

\subsection{Contact geometry}

A {\it strict contact structure} is the data $(Z,\alpha)$ where $Z$ is a  $2n+1$ dimensional manifold and $\alpha$ is a {\it contact one-form}, meaning that the volume form
 \begin{equation}
{\rm Vol}_\alpha :=\alpha\wedge \varphi^{\wedge n}
\end{equation}
is nowhere vanishing\footnote{A {\it contact structure} is the data of a maximally non-integrable hyperplane distribution; the kernel of~$\alpha$ (viewed as a map on tangent spaces $T_PZ\to {\mathbb R}$) determines precisely such a distribution (as does any $f \alpha$ where $0<f\in C^\infty Z$). Note also, that it is  interesting to consider models for which the Levi-form $\varphi= d\alpha$ has maximal rank, but ${\rm Vol}_\alpha$ may vanish (either locally or globally). The massless relativistic particle falls into this class.}, where the two form
$$
\varphi:=d\alpha\, ,
$$
determines the {\it Levi-form} along the distribution; we therefore also term $\varphi$ the Levi.
\medskip

The data $(Z,\alpha)$ allows us to formulate classical dynamics via the action principle
\begin{equation}
S = \int_{\gamma}\alpha\, ,
\label{action}
\end{equation}
defined by integrating the contact one-form along {\it unparameterized} paths $\gamma$ in $Z$. Requiring $S$ to be extremal under compact variations of the embedding $\gamma \hookrightarrow Z$ yields equations of motion
\begin{equation}
\label{Reeb}
\varphi(\dot \gamma,\bdot)=0\, .
\end{equation}
Since the Levi-form necessarily has maximal rank, the above condition determines the tangent vector to $\gamma$ up to an overall scale. The choice of solution $\dot \gamma=\rho$ to Equation~\nn{Reeb} with normalization
$\alpha(\rho)=1$ is called the {\it Reeb vector}.  Classical evolution is therefore governed by flows of the Reeb vector; and in this context is dubbed {\it Reeb dynamics}. It is not difficult to verify that these obey a contact analog of the classical Liouville theorem, namely that the volume form is preserved by Reeb dynamics:
$$
{\mathcal L}_\rho {\rm Vol}_\alpha= 0\, ,
$$
where ${\mathcal L}_\bdot$ denotes the Lie derivative.

The contact Darboux theorem is particularly powerful; it ensures that locally there exists a diffeomorphism on~$Z$ that brings any contact form to the normal form
\begin{equation}\label{normal}
\alpha = \uppi_A d\upchi^A - d\uppsi\, ,
\end{equation}
where $(\uppi_A, \upchi^A, \uppsi)$ are $2n+1$ local coordinates for $Z$.
On this coordinate patch the Reeb vector~$\rho=-\frac{\partial}{\partial \uppsi}$
so that dynamics are locally trivial.
Observe that in the worldline diffeomorphism  gauge $\uppsi=\tau$, where $\tau$ is a worldline parameter along $\gamma$, the action~\nn{action} becomes
$$
S=\int d\tau \big[\uppi_a \dot \upchi^a -1\big]\, .
$$
This is the Hamiltonian action principle for  a system with Darboux symplectic form $d\uppi_a \wedge d\upchi^a$ and trivial Hamiltonian $H=1$.

\subsection{Constraint analysis}

Our quantum BRST treatment of  Reeb dynamics requires that  we  examine the constraint structure of the model~\nn{action}. Firstly observe that
the action principle~\nn{action} is worldline diffeomorphism invariant, and in a choice of coordinates~$z^i$ for $Z$  reads  $S=\int\alpha_i(z) \dot{z}^i d\tau$. Therefore the canonical momenta~$p_i$ for $\dot z^i$ obey $2n+1$ constraints
$$
C_i:=p_i - \alpha_i(z)=0\, ,
$$ 
of which $2n$ are second class (because these constraints Poisson commute to give the maximal rank Levi-form: $\{C_i,C_j\}_{\rm PB}=\varphi_{ij}$) and one is first class (corresponding to worldline diffeomorphisms). By introducing $2n$ ``fiber coordinates'' $s^a$ (see~\cite{BFVsecond}), local classical dynamics can be described by an equivalent extended action principle 
for paths~$\Gamma$ in ${\mathcal Z}:=Z\times{\mathbb R}^{2n}$ for
which all constraints are first class\footnote{To analyze global dynamics  one ought promote ${\mathcal Z}$ to a bundle $Z\ltimes {\mathbb R}^{2n}$.}:
\begin{equation}
\label{extS}
S_{\rm \tiny ext}=\int_\Gamma\Big[
\tfrac 12 s^a J_{ab} d s^b  + A(s)
\Big]\, .
\end{equation}
In the above $J_{ab}$ is a constant, maximal rank antisymmetric matrix (and therefore an invariant tensor for the Lie algebra $\mathfrak{sp}(2n)$).
The one-form $A$ is given by
$$
A(s)= \alpha + e^a J_{ab} s^b + \omega(s)\, ,
$$
where the {\it soldering forms} $e^a$ together with the contact one-form $\alpha$ are a basis for $T^*Z$ such that the Levi-form decomposes as
$$
\varphi =\frac 12\,  J_{ab} e^a\wedge e^b\, ,
$$
and $e^a(\rho)=0$. The extended action~\nn{extS} enjoys $2n+1$ gauge invariances (and hence ${2n+1}$, abelian, first class constraints) when  $A$ obeys the zero curvature type condition\footnote{For a pair of one-forms $A$ and $B$, we denote $\{A(s)\wedge B(s)\}_{\rm PB}:=J^{ab} \frac{\partial A}{\partial s^a} \wedge  \frac{\partial B}{\partial s^b}$  where the inverse matrix $J_{ab}$ obeys $J_{ab}J^{bc}=\delta^c_a$.} $$
dA + \frac12 \{A\wedge A\hh \}_{\rm PB}=0\, .
$$
This condition can be used to  determine the one-form $\omega(s)$ to any order  in a formal power series in~$s$ (and therefore exactly for contact forms expressible as polynomials in some coordinate system).  The main ingredients for quantization are now ready.

\subsection{Flat connections}
Because the constraints are now abelian and first class, it is straightforward to  quantize the {\it extended Reeb dynamics} defined by the action~\nn{extS} using the Hamiltonian BRST technology of~\cite{BFV}. The resultant nilpotent BRST charge 
may be interpreted as a flat connection~$\nabla$  on the Hilbert bundle $Z\ltimes{\mathcal H}$. [An analogous connection has been constructed for symplectic manifolds in~\cite{Krysl}.] 
In detail, 
$$
\nabla = d +\widehat A\, ,
$$
where $\widehat A$ is a one-form taking hermitean values in the enveloping algebra ${\mathcal U}({\mathfrak {heis}})$ of the Heisenberg algebra 
\begin{equation}\label{Heisalg}
{\mathfrak {heis}}={\rm span}\{1,\hat s^a\}\, ,\qquad [\hat s^a,\hat s^b]=i\hbar J^{ab}\, .
\end{equation}
In particular
$$
i\widehat A =  \, \frac{\alpha}{\hbar}  + \frac{e^a J_{ab} \hat s^b}{\hbar} + i\, \widehat \Omega\, ,
$$
where $\hbar \widehat \Omega$ is a hermitean operator,  potentially involving higher powers of the generators~$\hat s^a$, that is  expressible as a formal power series in $\hbar$.
It is formally determined  by the zero curvature condition
\begin{equation}\label{zipcurves}
\nabla^2=0\, .
\end{equation}

\medskip

\begin{example}[Hamiltonian dynamics]
\label{HamD}

Let $Z={\mathbb R}^3=\{p,q,t\}$ and $$\alpha= pdq - H(p,q,t) dt\, ,$$ 
with Hamiltonian $H$ given by a (possibly time-dependent) polynomial in $p$ and $q$. Notice that
$
\varphi = e \wedge f$ where $e:= dp +\frac{\partial H}{\partial q} dt$ and $f:=dq-\frac{\partial H}{\partial p} dt
$, so we make a choice of soldering $e^a=(f,e)$ which we use to construct the flat connection: 
\begin{equation}\label{Hnabla}
\nabla = d + \frac i \hbar \Big[dp\,  S - dq \, \Big(p+\frac \hbar i \frac\partial{\partial S}\Big)\Big]
+\frac i\hbar \, dt  \widehat H\, ,
\end{equation}
where the operator
$$
\widehat H := \Big(H\big(q+S,p+\frac\hbar i \frac\partial{\partial S}\big)\Big)_{\rm Weyl}
$$
is given by Weyl ordering the operators\footnote{Note that we have made the choice of Hilbert space ${\mathcal H}=L^2({\mathbb R})$ here as well as a polarization for the space of wavefunctions. Different choices of polarization differ only by gauge transformations---recall that in its metaplectic representation, compact elements of $sp(2n)$ act by Fourier transform on Schwartz functions.} $\hat s^a := (S,\frac\hbar i \frac\partial{\partial S})$ (This ensures formal self-adjointness of the operator $\widehat H$.) The Schr\"odinger equation~\nn{Schroedinger} may be solved by setting $\Psi=\exp(-\frac i\hbar p S) \, \psi(q+S,t)$, where $\psi(Q,t)$ obeys the standard time dependent Schr\"odinger equation
$$
i\hbar \hh\frac{\partial \psi(Q,t)}{\partial t} = \Big(H\big(Q,\frac\hbar i\frac\partial{\partial Q}\big)\Big)_{\rm \tiny Weyl}\hh  \psi(Q,t)\, .
$$
This example therefore shows how contact quantization  recovers standard quantum mechanics.
\end{example}

\medskip

To better understand the space of flat connections $\nabla$, 
we further organize the expansion in powers of operators $\hat s$  by assigning a grading $\sf gr$ to the operators $\hat s$ and $\hbar$ where\footnote{When applied to sums of terms inhomogeneous in the grading, we define ${\sf gr}$ by the grade of the lowest grade term. }
$$
{\sf gr}(\hbar) =2\,, \qquad
{\sf gr}(\hat s^a) =1\, .
$$
Thus, arranging the connection in terms of this grading we have 
$$
\nabla = \underbrace{\frac{\alpha}{i\hbar}}_{-2}  + 
\underbrace{\frac{e^a J_{ab} \hat s^b}{i\hbar} }_{-1}
+\underbrace{\, d_{w_{\phantom{A_A\!\!\!\!\!\!}}}}_{0}+ \underbrace{\widehat\omega}_{\geq 1}\, ,
$$
where
$$
d_\omega:= d + \frac1{2i\hbar}\, \omega_{ab}\hat s^a \hat s^b\, .
$$
Here the symmetric part of $\omega_{ab}$ gives an $\frak{sp}(2n)$-valued one-form (or connection) while the antisymmetric part is necessarily pure imaginary in order that $\widehat \Omega$ is hermitean.
Also, the terms with strictly positive grading  are $\widehat \omega:=\widehat\Omega-\frac1{2i\hbar}\, \omega_{ab}\hat s^a \hat s^b$.

Observe that this grading is invariant under rewritings of products of the operators $\hat s$ given by quantum reorderings, for example 
$$
{\sf gr}(\hat s^a \hat s^b) = {\sf gr}\Big(\hat s^b \hat s^a +i\hbar  J^{ab}\Big)\, .
$$
In other words, ${\sf gr}$ filters
 ${\mathcal U}({\mathfrak {heis}})$. The projection of an element in ${\mathcal U}({\mathfrak {heis}})$
 to the part of grade $k$ is denoted by\footnote{We also employ ${\sf gr_K}(\bdot)$, where $K\subset {\mathbb Z}$, to denote projection to 
 subspaces with the corresponding grades. For the exterior derivative, we define ${\sf gr}(d)=0$.
 } ${\sf gr_k}(\bdot)$.
 
In Theorem~\ref{gaugeflat}  we shall show that locally, every solution to the flatness condition~\ref{zipcurves} is formally\footnote{The terms {\it formally equivalent}  here are defined to mean that  gauge transformations exist giving connections that are equal to any chosen order in the grading ${\sf gr}$.} gauge equivalent\footnote{To be sure, we are not claiming that this means all quantum dynamics on a given Hilbert space are equivalent, rather
having identified the physical meaning of variables for a given connection $\nabla$, the ``gauge equivalent'' (in the bundle sense) connection $\nabla'=\widehat U \, \nabla \, \widehat U^\dagger$ will in general describe different dynamics.
  This is much like the case of active diffeomorphisms for a theory in a fixed
 generally curved background.
Moreover, it is a highly useful feature, because at least locally, it allows complicated dynamics to be described in terms of simpler ones.
 } to a connection where $\widehat \Omega=0$. Moreover the latter such solutions always exist.

 \color{red}
\color{black}

\medskip

Realizing $\hat s^a$ by hermitean operators representing the Heisenberg algebra acting on~$\mathcal H$, the (principal) connection $\nabla$ gives a connection on the (associated) Hilbert bundle $Z\ltimes {\mathcal H}$. The Schr\"odinger equation 
is then simply the parallel transport condition
\begin{equation}\label{Schroedinger}
\nabla \Psi = 0
\end{equation}
on Hilbert bundle sections $\Psi\in \Gamma(Z\ltimes {\mathcal H})$. Indeed, modulo (non-trivial) global issues, the problem of quantizing a given classical system now amounts to solving the above flat connection problem~\nn{zipcurves}, while quantum dynamics amounts to parallel transport.

\medskip

\begin{theorem}\label{gaugeflat}
Any two flat connections $\nabla=d+\widehat A$ and $\nabla'=d+\widehat A'$ where
$$
{\sf gr}_{-2}(\widehat A\hh )=\frac{\alpha}{i\hbar}={\sf gr}_{-2}(\widehat A')\, ,
$$
are locally, formally gauge equivalent.
\end{theorem}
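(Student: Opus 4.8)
The plan is to follow Fedosov's iterative scheme, building a formal gauge transformation grade by grade from a contracting homotopy for the leading operator. Write $\widehat A_k:=\mathsf{gr}_k(\widehat A)$, so that $\widehat A=\widehat A_{-2}+\widehat A_{-1}+\widehat A_0+\cdots$ with $\widehat A_{-2}=\alpha/(i\hbar)$ and $\widehat A_{-1}=e^aJ_{ab}\hat s^b/(i\hbar)$. The key observation is that the grade $-1$ piece acts as a model differential: setting $\delta:=[\widehat A_{-1},\,\cdot\,]$, the Heisenberg relation~\nn{Heisalg} gives $\delta(\hat s^c)=e^c$, so $\delta$ raises form degree by one and lowers $\hat s$-degree by one, and moreover $\delta^2=0$ since $\widehat A_{-1}\wedge\widehat A_{-1}=-\varphi/(i\hbar)$ is proportional to the identity operator and hence central. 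First I would establish the Fedosov-type Hodge decomposition for $\delta$: a homotopy $\delta^{-1}$ (contracting a soldering leg $e^a$ into $\hat s^a$, hence lowering form degree and raising $\hat s$-degree by one) satisfying $\delta\delta^{-1}+\delta^{-1}\delta=\mathrm{id}-H$, where $H$ projects onto the ``harmonic'' elements of form degree $0$ and $\hat s$-degree $0$, i.e.\ the scalars. Note that gauge parameters are $0$-forms, so that $\delta^{-1}$ applied to a one-form lands in the correct (form-degree zero) space.

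With this in hand the argument is an induction on the grading that removes the difference $\Delta:=\widehat A'-\widehat A$ one grade at a time. Subtracting the two flatness conditions $\nabla^2=0=(\nabla')^2$ yields $\nabla_{\widehat A}\Delta+\tfrac12[\Delta\wedge\Delta]=0$, where $\nabla_{\widehat A}\Delta=d\Delta+[\widehat A\wedge\Delta]$. Suppose $\widehat A$ and $\widehat A'$ agree through grade $k-1$ with $k\geq 0$, so that $\Delta$ has leading grade $k$. Since $\delta$ lowers grade by one while the quadratic term $\tfrac12[\Delta\wedge\Delta]$ lives in grade $\geq 2k$, which exceeds $k-1$, the lowest-grade ($k-1$) component of the subtracted flatness equation reduces to $\delta\Delta_k=0$. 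Because $\Delta_k$ is a one-form, its harmonic part vanishes, $H\Delta_k=0$, and the decomposition forces $\Delta_k=\delta(\delta^{-1}\Delta_k)$, i.e.\ $\Delta_k$ is $\delta$-exact. I would then apply the gauge transformation generated by $\widehat g:=\delta^{-1}\Delta_k$, which has grade $k+1$; its leading effect on $\widehat A$ is $\delta_{\widehat g}\widehat A=\mp\nabla_{\widehat A}\widehat g+O(\widehat g^2)=\mp\delta\widehat g+(\text{grade}>k)$, and choosing the sign appropriately removes exactly the grade-$k$ discrepancy. Since a grade-$(k+1)$ generator first acts at grade $k$, all lower grades are untouched, so the new difference begins in grade $\geq k+1$. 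Iterating and composing the resulting transformations gives, in the filtered/formal sense, a single gauge transformation carrying $\nabla$ to $\nabla'$.

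The one step that must be treated separately is the base case at grade $-1$, where the quadratic term $\tfrac12[\Delta\wedge\Delta]$ sits in the same grade $-2$ as $\delta\Delta_{-1}$ and therefore cannot be dropped. Here both $\widehat A_{-1}$ and $\widehat A'_{-1}$ are forced by their own grade $-2$ flatness to solve $\widehat A_{-1}\wedge\widehat A_{-1}=-\varphi/(i\hbar)$; I would show that the solutions of this ``frame'' equation differ only by a linear symplectic change of soldering frame, which is realized by a grade-$0$ gauge transformation in the metaplectic representation of $\mathfrak{sp}(2n)$. Using this grade-$0$ freedom to align the grade $-1$ parts, $\Delta$ then begins in grade $\geq 0$ and the clean induction above applies. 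I expect this grade $-1$ alignment to be the main obstacle: it is the only place where the quadratic term obstructs the purely linear homotopy argument, and it requires isolating the residual symplectic-frame ambiguity and lifting it to the metaplectic gauge group. A secondary point, which is bookkeeping rather than a genuine obstruction, is to check that each generator $\widehat g=\delta^{-1}\Delta_k$ can be chosen so that $e^{\widehat g}$ respects the reality/hermiticity condition, ensuring that the equivalence is realized by formally unitary Hilbert-bundle gauge transformations $\nabla'=\widehat U\,\nabla\,\widehat U^\dagger$; since $\Delta_k$ always carries form degree one, the cohomological obstruction vanishes identically and only the leading homotopy term is needed at each stage.
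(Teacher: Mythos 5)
Your overall architecture---a grade-by-grade induction driven by the Koszul differential $\delta=[\widehat A_{-1},\,\cdot\,]$ and its contracting homotopy, with the grade $-1$ base case handled by lifting an $Sp(2n)$ frame rotation to the metaplectic representation---is essentially the paper's strategy (the paper normalizes both connections to a common Darboux model $\nabla_{\rm D}$ built from \emph{closed} soldering forms $E^a$, rather than comparing $\widehat A$ and $\widehat A'$ directly, but that difference is cosmetic). The genuine gap is in your Hodge decomposition, and it sits exactly where you claim there is nothing to check. In the contact setting $T^*Z$ is $(2n+1)$-dimensional: the pair $\delta$, $\delta^{-1}$ only creates and destroys legs along the $2n$ soldering forms $e^a$, while the Reeb direction spanned by $\alpha$ itself is inert. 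A central one-form $\hbar^j f\alpha$ (a multiple of the identity operator) is therefore $\delta$-closed---central elements commute with everything---but it is \emph{not} $\delta$-exact, since every element of ${\rm im}\,\delta$ carries at least one $e^a$ leg. Concretely, $\delta(\hbar^j f\alpha)=0=\delta^{-1}(\hbar^j f\alpha)$, so $(\delta\delta^{-1}+\delta^{-1}\delta)(\hbar^j f\alpha)=0\neq \hbar^j f\alpha$, contradicting your identity $\delta\delta^{-1}+\delta^{-1}\delta={\rm id}-H$ with $H$ projecting onto scalars only. The $\delta$-cohomology in form degree one is $C^\infty Z[[\hbar]]\cdot\alpha$, not zero; you have imported Fedosov's symplectic statement, which holds only because there all of $T^*M$ pairs with the fiber generators. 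Hence your conclusions ``$H\Delta_k=0$ for one-forms'' and ``the cohomological obstruction vanishes identically'' are false, and these obstructions genuinely arise: they are precisely the one-forms $\alpha_1,\alpha_2,\ldots$ appearing in the paper's proof, which survive every commutator-type gauge transformation because, being central, they are invisible to $[\hat u,\,\cdot\,]$ and can only be shifted by the inhomogeneous term $-d\hat u$ of a gauge transformation with \emph{central} generator. (By contrast, the step you predicted to be the main obstacle---the grade $-1$ frame alignment---is the easy pointwise linear-algebra part, handled by the paper's generator $\hat u_0$.)

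The missing argument, which is the paper's closing move, disposes of this harmonic residue by analysis rather than algebra. After the homotopy has removed everything else, the connection takes the form $\nabla_{\rm D}-i\alpha_\hbar$ with $\alpha_\hbar$ a central one-form built from $\hbar$-independent one-forms times powers of $\hbar$. Applying flatness once more (all non-central terms now cancel because $\nabla_{\rm D}$ is flat and $\alpha_\hbar$ is central) forces $d\alpha_\hbar=0$; the Poincar\'e lemma then gives, locally, $\alpha_\hbar=d\beta_\hbar$, and the phase gauge transformation $\exp(i\beta_\hbar)$---which is \emph{not} of your form $\exp(\delta^{-1}(\cdots))$---removes it. Note this is a second, independent use of locality beyond the Darboux theorem. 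If you correct $H$ to project onto $C^\infty Z[[\hbar]]\oplus C^\infty Z[[\hbar]]\cdot\alpha$ and insert this closedness-plus-Poincar\'e step to eliminate $H\Delta_k$ (which shows up at even grades), your induction closes and essentially reproduces the paper's proof.
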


\begin{proof}
The contact Darboux theorem ensures that locally, there exists a set of \emph{closed} one-forms $dE^a = 0$, such that $$
 \varphi =\tfrac{1}{2}
 J_{ab}E^a\wedge E^b  \mbox{ and } \iota_\rho E^a=0\, .$$
(In the normal form~\nn{normal}, $E^a=(d\upchi^A,d\uppi_A)$.)
Hence the connection
\begin{equation}\label{dbc}
\nabla_{\rm D}:=\frac{\alpha}{i\hbar}+\frac{E^a J_{ab} \hat s^b}{i\hbar} + d
\end{equation}
solves the flatness condition~\nn{zipcurves}.
Our strategy is to construct the gauge transformation bringing a general flat $\nabla$ to this ``Darboux form''.

Firstly, the flatness condition of a general $\nabla=d+\widehat A$ at grade $-2$ implies that 
$$
\frac{d\alpha}{i\hbar } + \Big({\sf gr}_{-1}\big( \widehat A\, \big)\Big)^2=0\, .
$$
This is solved, as discussed earlier, by
$$
i\hbar \, {\sf gr}_{-1} \widehat A= e^a J_{ab} \hat s^b\, ,
$$
where
$$
\varphi = \tfrac{1}{2}J_{ab}e^a\wedge e^b\mbox{ and } \iota_\rho e^a=0\, .
$$
Comparing the line above with the first display of this proof, we see there must (pointwise in some neighborhood in $Z$) exist an invertible linear transformation $U\in GL(2n)$ such that
$$E^a = U^a{}_{b}e^b\, .$$
Moreover, $U$ must preserve $J$ and hence is in fact $Sp(2n)$-valued with unit determinant. Thus, we may write $U=\exp(u)$. In turn it follows that
$$
{\sf gr}_{\{-2,-1\}}\big(\exp(\hat u_0) \widehat A \exp(-\hat u_0 )\big)=
\frac{\alpha}{i\hbar}+
\frac{
E^a J_{ab} \hat s^b}{i\hbar}\, , 
$$
where 
$$\hat u_0=\frac{
J_{ac}u^c{}_b
\hat s^a \hat s^b}{2i\hbar}\, .$$ 
Essentially, we have just intertwined $U$ in the fundamental representation of $Sp(2n)$ to its metaplectic representation.

We now observe that
\begin{equation}\label{LOOKHERE}
{\sf gr}_{0}\big(\exp(\hat u_0) (d+\widehat A\, ) \exp(-\hat u_0 )\big)=
d
-i
\alpha_1+
\frac{\omega_{ab}
\hat s^a \hat s^b}{2i\hbar}\, , 
\end{equation}
where $\alpha_1$ is some real-valued, $\hbar$-independent one-form and the one-form $\omega_{ab}=\omega_{ba}$
(the Heisenberg algebra~\nn{Heisalg} may be used to absorb an antisymmetric part of $\omega_{ab}$ in $\alpha_1$).

We now again employ flatness of $\nabla$ and closedness of the $E^a$'s to obtain 
$$
0=
{\sf gr}_{-1}\Big(\big(\exp(\hat u_0) (d+\widehat A\, ) \exp(-\hat u_0 )\big)^2\Big)=\frac{\omega_{ab}\wedge E^a\, \hat s^b}{i\hbar}\, .
$$
We decompose the one-form $\omega^{ab}$ with respect to the (local) basis $(\alpha,e^a)$ for $T^*Z$ as
$
\omega_{ab}=W_{ab}\,  \alpha + W_{abc} E^c
$. 
The above display then implies that the functions $W_{ab}$ must vanish and 
$$
W_{abc}E^a \wedge E^c=0\, .
$$
Hence $W_{abc}$ is totally symmetric in the indices $a,b,c$.

We now gauge away the term $\omega_{ab}\hat s^a \hat s^b/(2i\hbar)=W_{abc}\hat s^a \hat s^b E^c/(2i\hbar)$ in Equation~\nn{LOOKHERE}.
Since we are working formally order by order in the grading, we may employ the Baker--Campbell--Hausforff formula $\exp(\hat u)  \, \widehat W \exp(-\hat u ) = \exp([\hat u,\bdot])(\widehat W)$. In particular
$$
{\sf gr}_{0}\Big(\exp(\hat u_1)  \frac{E^a J_{ab}\hat s^b}{i\hbar} \exp(-\hat u_1 )\Big)=-\frac{W_{abc}\hat s^a \hat s^b E^c}{2i\hbar}\, ,
$$
for the choice $\hat u_1= W_{abc} \hat s^a\hat s^b \hat s^c/(3!i\hbar)$. 
Hence we have achieved
$$
{\sf gr}_{\{-2,-1,0\}}\Big(
\exp(\hat u_1)\exp(\hat u_0) (d+\widehat A\, ) \exp(-\hat u_0 )
\exp(-\hat u_1 )
\Big)
=\frac{\alpha}{i\hbar}
+\frac{E^a J_{ab}\hat s^b}{i\hbar}
+d-i\alpha_1\, .
$$
 
 At this juncture, we have established the base case for an induction.
 Proceeding recursively we now assume that the flat connection $\nabla=d+\widehat A$ obeys
 $$
 {\sf gr}_{\{-2,\ldots,k\}}(\widehat A\, )=\frac{\alpha+\hbar \alpha_1+\cdots+\hbar^{[(k+1)/2]}\alpha_{[(k+1)/2]}}
 {i\hbar}+\frac{E^a J_{ab}\hat s^b}{i\hbar}
+d+\hat \omega_k\, ,
 $$
 where $\alpha_i$ are $\hbar$-independent one-forms and, without loss of generality, take  ${\sf gr}(\hat \omega_k)= k$.
 
Employing the flatness condition for $\nabla$ along the same lines explained above to $\hat \omega_k$ shows that
$$
i\hbar \hat \omega_k =
\left\{
\begin{array}{r}
\frac{1}{(k+2)!}\, 
W_{a_1\ldots a_{k+3}}\hat s^{a_1}\cdots \hat s^{a_{k+2}} E^{a_{k+3}} + \frac{\hbar}{k!}\, W_{a_1\ldots a_{k+1}}\hat s^{a_1}\cdots \hat s^{a_{k}} E^{a_{k+1}} +
\cdots\qquad \\[2mm]
+\, {\hbar^{(k+1)/2}}W_{a_1a_2}\hat s^{a_1}E^{a_2}\, ,\quad k \mbox{ odd}\, ,\\[3mm]
\frac{1}{(k+2)!}\, 
W_{a_1\ldots a_{k+3}}\hat s^{a_1}\cdots \hat s^{a_{k+2}} E^{a_{k+3}} + \frac{\hbar}{k!}\, W_{a_1\ldots a_{k+1}}\hat s^{a_1}\cdots \hat s^{a_{k}} E^{a_{k+1}} +
\cdots\qquad \\[2mm]
+\, {\hbar^{k/2}}W_{a_1a_2a_3}\hat s^{a_1}
\hat s^{a_2}
E^{a_3}
+\hbar^{(k+2)/2} \, \alpha_{(k+2)/2}\, ,\quad k \mbox{ even}\, ,
\end{array}
\right.
$$
where the tensors $W$ are totally symmetric and $\alpha_{(k+2)/2}$ is some one-form. Both the~$W$'s and $\alpha_{(k+2)/2}$ are $\hbar$-independent.
Indeed, all  terms save the one-form $\alpha_{(k+2)/2}$
can---{\it mutatis mutandis}---be removed by higher order analogs of the gauge transformation $\exp(\hat u_1)$ employed in the base step above. Hence we have now proven that locally, gauge transformations achieve the form (formally to any power in the grading)
$$
\nabla=
\nabla_{\rm D} -i \sum_{j>1} \hbar^{j-1} \alpha_j\, .
$$
It only remains to apply the flatness condition one more time to show that the one-form $\alpha_\hbar :=\sum_{j>1} \hbar^{j-1} \alpha_j$ is 
closed and therefore locally $\alpha_\hbar = d \beta_\hbar$ 
for some function $\beta_\hbar$. Thus $\exp(i\beta_\hbar) \nabla \exp(-i\beta_\hbar)=\nabla_{\rm D}$.
\end{proof}

\begin{example}[The harmonic oscillator]
Let $Z={\mathbb R}^3=\{p,q,t\}$ and 
$$\alpha= pdq - \frac12(p^2+q^2)dt\, .$$
The Levi form
$$
\varphi = d\uppi\wedge d\upchi\,  ,
$$
where
$$
\uppi = \frac12 (p^2+q^2)\, ,\quad
\upchi = -t-\arctan(p/q)\, .
$$
Indeed, setting $\uppsi=-\frac12 pq$, we have $\alpha = \uppi d\upchi -d\uppsi$, so $(\uppi,\upchi,\uppsi)$ are local Darboux coordinates and (denoting $\hat s^a:=(\hat S,\hat P)$) the Darboux normal form~\nn{dbc} for the connection becomes
\begin{equation}\label{DB3}
\nabla_{\rm D}:=\frac{\uppi d\upchi -d\uppsi}{i\hbar}+\frac{\hat S d\uppi-\hat P d\upchi}{i\hbar} + d\
\end{equation}
Let us now run the steps of the above proof in {\it reverse} to show how to find gauge transformations bringing $\nabla_{\rm D}$ to the 
Hamiltonian dynamics form of~\ref{Hnabla}.

The closed soldering forms $E^a=(d\upchi,d\uppi)$ are related to those of the Hamiltonian dynamics Example~\ref{HamD} (given here by $e^a=(dq-pdt,dp+qdt)=:(f,e)$) according to  the $Sp(2)$ transformation
$$
E^a:=
\begin{pmatrix}
d\upchi
\\[1mm]  d\uppi
\end{pmatrix}=
\begin{pmatrix}
\frac{p}{2\uppi}
&
-\frac{q}{2\uppi} 
 \\[1mm]
q & p
\end{pmatrix}
\begin{pmatrix}
dq-pdt\\[1mm] dp+q dt
\end{pmatrix}=: U^a{}_b e^b\, .
$$
Writing $U=\exp(u)$ and then intertwining to its metaplectic representation
$\widehat U := \exp\big(
\frac{
J_{ac}u^c{}_b
\hat s^a \hat s^b}{2i\hbar}\big)$, 
we have $\widehat U^{-1} \big(\frac{\alpha}{i\hbar}+\frac{E^a J_{ab} \hat s^a}{i\hbar}\big)\widehat U=\frac{\alpha}{i\hbar}+\frac{e^a J_{ab} \hat s^a}{i\hbar}$, while a
short computation shows that the  $\mathfrak{sp}(2)$-valued one-form $U^{-1} d U$ is given explicitly by
$$
U^{-1}d U= \begin{pmatrix}
\, 0&-dt\, 
\\[3mm]\, dt&0\, 
\end{pmatrix}
+\begin{pmatrix}
-\frac{(p^2-q^2)(pe+qf)}{4\uppi^2}
&
\frac{(3p^2+q^2)q e-(p^2-q^2)pf}{4\uppi^2}
 \\[2mm]
 \frac{
(p^2-q^2)qe+(p^2+3q^2)pf}{4\uppi^2}
&
\frac{(p^2-q^2)(pe+qf)}{4\uppi^2}
\end{pmatrix}\, .
$$
It is not difficult to verify that the 
last term in the above display, which can be re-expressed as $W^a{}_{bc}e^c$ where the tensor $W_{abc}$ (moving indices with the antisymmetric bilinear form $J$) is totally symmetric\footnote{
Note that $$W_{222}=\frac{(3p^2+q^2)p}{4\uppi^2}\, , \quad W_{221}=-\frac{(p^2-q^2)q}{4\uppi^2}\, ,\quad
W_{211}=-\frac{(p^2-q^2)p}{4\uppi^2}\, ,\quad
W_{111}=\frac{(3p^2+q^2)q}{4\uppi^2}\, .$$}.
Moreover, interwining the first term to the metaplectic representation gives  the standard harmonic oscillator hamiltonian
$\frac i{2\hbar}\, dt (\hat P^2 + \hat S^2)$. Hence
the difference between the gauge transformed Darboux connection  and the Hamiltonian dynamics connection  of Equation~\ref{Hnabla} is 
$$
\widehat U^{-1} \nabla_{\rm D} \widehat U-\nabla=\frac{\hat s^a \hat s^b W_{abc}e^c}{2i\hbar}\, .
$$
The above term is order $0$ in the grading ${\sf gr}$ and therefore seeds the recursion described in the proof of Theorem~\ref{gaugeflat}. It is removed by a grade $1$ gauge transformation $\exp(\hat u_1)$ with $\hat u_1 = \frac{\hat s^a \hat s^b \hat s^cW_{abc}}{3!i\hbar}$.
It would desirable to have an efficient recursion to compute all higher terms with respect to the grading ${\sf gr}$ for the gauge transformation between $\nabla$ and $\nabla_{\rm D}$,
because in a general setting this would facilitate computation of quantum correlators.
\end{example}

\subsection{Contact deformation quantization}\label{Contactdef}
The above proof of gauge equivalence of flat connections is very close in spirit to Fedosov's formal quantization for symplectic and Poisson structures\footnote{Deformation quantization dates back to the seminal work of Bayen {\it et al}~\cite{Bayen}, see also~\cite{Beliavsky} for a review of symplectic connections.}. That work is concerned with constructing a quantum deformation of the Moyal star product, while here we wish to describe both dynamics and quantization. Nonetheless, we can employ's Fedosov's method to our quantized contact connection $\nabla$, to find a quantum deformation of the commutative algebra of classical solutions.

To study the algebra of operators, instead of the Hilbert bundle over $Z$,
we consider a {\it Heisenberg bundle}~$Z\ltimes {\mathcal U}({\mathfrak {heis}})$, defined in the same way as the Weyl bundle, except that instead of working with fibers given by functions of ${\mathbb R}^{2n}$ with a non-commutative Moyal star product, we work directly with operators\footnote{Recall that the Moyal star product amounts simply to coordinatizing the space of operators ${\mathcal U}({\mathfrak {heis}})$ in terms of functions of ${\mathbb R}^{2n}$ by employing a Weyl-ordered operator basis, and then encoding their algebra using a non-commutative $\star$-multiplication of functions.}. For our purposes, the key point is that local sections $\hat a$ of the Heisenberg bundle are functions of $Z$ taking values in  ${\mathcal U}({\mathfrak {heis}})$, which can be expressed with respect to the grading ${\sf gr}$ as  
$$
\hat a=\underbrace{\frac{a^{\sss(-2)}}{i\hbar} }_{-2}+ \underbrace{\frac{a^{\sss(-1)}_a\hat s^a}{i\hbar}}_{-1} + 
\underbrace{\frac{a^{\sss(0)}_{ab}\hat s^a\hat s^b}{2i\hbar}-ia^{\sss(0)}}_{0}+\cdots
$$
Importantly, $a^{(k)}$ are $\hbar$ independent, and we do not allow negative powers of $\hbar$ greater than one.

Requiring total symmetry of the tensors $a^{(k)}_{a_1\ldots a_{j\leq k}}$ appearing in the above expansion uniquely determines a function of $\hbar$ which---following Fedosov---we call the {\it abelian part} of $\hat a$ and denote by
$$
\sigma(\hat a) := a^{\sss(-2)}+\hbar a^{\sss(0)} + \hbar^2 a^{\sss(2)} + \cdots\, . 
$$
We call $\hat a-\frac1 {i\hbar}\sigma(\hat a)$ the {\it non-abelian} part of $\hat a$.

The flat 
connection $\nabla$ acts on sections of 
the Heisenberg bundle by the adjoint action
$$
\nabla \hat a := d\hat a + [\widehat A,\hat a]\, .
$$
The following lemma locally characterizes parallel sections.

\begin{lemma}
Let $f_\hbar\in C^\infty Z[[\hbar]]$ obey
$$
{\mathcal L}_\rho f_\hbar = 0\, .
$$
Then locally, there is a unique section $\hat a\in \Gamma(Z\ltimes {\mathcal U}({\mathfrak {heis}}))$ such that
$$
\nabla \hat a = 0 \mbox{ and } \sigma(\hat a) = f_\hbar\, .
$$
\end{lemma}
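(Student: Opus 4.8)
The statement is the contact analogue of Fedosov's theorem on the existence and uniqueness of flat sections with prescribed symbol, and I would prove it by the same contracting-homotopy iteration. The first task is to isolate the lowest piece of the adjoint connection in the grading ${\sf gr}$. Since the identity is central, the grade $-2$ term $\alpha/(i\hbar)$ drops out of $\nabla\hat a=d\hat a+[\widehat A,\hat a]$, and the Heisenberg relations~\nn{Heisalg} give $[\hat s^b,\hat a]=i\hbar J^{bc}\partial_{\hat s^c}\hat a$. Hence the grade $-1$ part of $\nabla$ acting by the adjoint is the Fedosov operator
$$
\delta\,\hat a:=e^a\,\partial_{\hat s^a}\hat a\, ,
$$
which raises form degree along the transverse coframe $e^a$ and lowers $\hat s$-degree by one, so that $\nabla=\delta+\mathcal D$ with $\mathcal D$ of grade $\geq 0$ (here $\mathcal D$ collects $d$, the grade-zero $\mathfrak{sp}(2n)$ rotation, and all higher terms). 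I would then introduce the standard contracting homotopy $\delta^{-1}$, which contracts one $\hat s^a$ back into an $e^a$ while leaving any $\alpha$-leg inert, and record the identity $\delta\delta^{-1}+\delta^{-1}\delta=\id-\Pi$, where $\Pi$ projects onto the part carrying no $e^a$ and no $\hat s^a$. On a zero-form section this reads $\delta^{-1}\delta\,\hat a=\hat a-\Pi\hat a$, and total symmetry (Weyl ordering) of the tensors identifies $\Pi\hat a=\sigma(\hat a)/(i\hbar)$.

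With this in hand, the flatness equation $\nabla\hat a=0$, i.e. $\delta\hat a=-\mathcal D\hat a$, is converted by applying $\delta^{-1}$ into the fixed-point equation
$$
\hat a=\frac{1}{i\hbar}\,f_\hbar-\delta^{-1}\mathcal D\hat a\, ,
$$
using $\sigma(\hat a)=f_\hbar$ together with $\delta f_\hbar=0$. Because $\delta^{-1}$ raises the grading by one while $\mathcal D$ has grade $\geq 0$, the operator $\delta^{-1}\mathcal D$ strictly raises ${\sf gr}$; the iteration therefore converges order by order in the grading filtration and produces a unique section $\hat a$ with $\sigma(\hat a)=f_\hbar$ solving this equation. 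This already yields the uniqueness claim and the existence of a candidate $\hat a$.

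The substantive point is to verify that the candidate is genuinely parallel, $\nabla\hat a=0$, and not merely a solution of the $\delta^{-1}$-projected equation. Following Fedosov, I would set $\hat c:=\nabla\hat a$, use the flatness $\nabla^2=0$ of~\nn{zipcurves} (which gives $\nabla\hat c=[\mathcal F,\hat a]=0$ since the curvature vanishes) and feed the fixed-point relation back in. Combining $\delta^{-1}\hat c=0$ with $\nabla\hat c=0$ yields a homogeneous equation $\hat c=\Pi\hat c-\delta^{-1}\mathcal D\hat c$, so that $\hat c$ must vanish provided its harmonic part $\Pi\hat c$ does. This is exactly where the hypothesis enters: $\hat c$ is a one-form, so $\Pi\hat c$---carrying no $e^a$---is forced to lie along $\alpha$, and since $\delta$ can never generate an $\alpha$-leg to cancel the Reeb component of $df_\hbar$, one finds at leading order $\Pi\hat c=(\mathcal L_\rho f_\hbar)\,\alpha/(i\hbar)$. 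The condition $\mathcal L_\rho f_\hbar=0$ kills this obstruction, after which the grading contraction forces $\hat c=0$ to all orders.

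I expect the delicate step to be precisely this last one: controlling the $\alpha$/Reeb-direction component of $\nabla\hat a$ to all orders. Unlike the symplectic Fedosov setting, the fiber coordinates $\hat s^a$ parametrize only the $2n$ transverse Levi directions, so $\delta$ and $\delta^{-1}$ act trivially on the extra contact direction $\alpha$; the homotopy alone cannot absorb anything proportional to $\alpha$, and the Reeb-invariance of the symbol is the unavoidable integrability condition. A convenient simplification is to first reduce to the Darboux connection $\nabla_{\rm D}$ of~\nn{dbc} via Theorem~\ref{gaugeflat}, for which the adjoint connection is simply $d+\delta$ with closed $e^a=E^a$ and no higher corrections, whence $\mathcal D=d$ exactly; since $d$ preserves $\hat s$-degree, $\Pi\hat c=\Pi\,d\big(f_\hbar/(i\hbar)\big)=(\mathcal L_\rho f_\hbar)\,\alpha/(i\hbar)$ holds with no higher corrections and the obstruction is transparent. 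For a general flat $\nabla$ one then either transports the conclusion back along the gauge equivalence---tracking the induced filtered action on $\sigma$, which preserves $\alpha$ and hence $\mathcal L_\rho$---or repeats the argument directly, controlling the $[\widehat\Omega,\,\cdot\,]$ contributions to $\Pi\hat c$ by induction on the grading.
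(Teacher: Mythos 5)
Your proposal is correct in substance and isolates exactly the mechanism the paper relies on, but it runs on different machinery. The paper never introduces the homotopy operator $\delta^{-1}$ or the projector $\Pi$: it first invokes Theorem~\ref{gaugeflat} to write $\nabla=\exp(\hat u)\circ\nabla_{\rm D}\circ\exp(-\hat u)$ locally, and then solves $\nabla_{\rm D}\hat b=0$ by a bare-hands recursion in ${\sf gr}$, where at each grade one solves for the next coefficient $b^{(k)}_{a_1\cdots a_j}$ because the differential of the previous one lies in ${\rm span}\{E^a\}$ --- Cartan's magic formula plus $\uppsi$-independence in Darboux coordinates. Your observation that $d$ can produce an $\alpha$-leg which $\delta$ can never cancel, with ${\mathcal L}_\rho f_\hbar=0$ as the unavoidable integrability condition, is precisely that step in homotopy-operator language, so the two proofs agree on where the hypothesis enters. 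Your packaging buys two things the paper obtains by hand: uniqueness is immediate and holds for an arbitrary flat $\nabla$ (any parallel section with symbol $f_\hbar$ satisfies the fixed-point equation $\hat a=f_\hbar/(i\hbar)-\delta^{-1}\mathcal{D}\hat a$, whose solution is unique by the grading contraction, matching the paper's separate grade-by-grade uniqueness argument), and the obstruction is exhibited conceptually as the $\alpha$-component $\Pi\hat c$ rather than emerging line-by-line in a recursion.

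The one place where your argument is thinner than the paper's is the existence step for a \emph{general} flat $\nabla$, and it is exactly where the paper spends its effort. You concede that $\Pi\hat c=({\mathcal L}_\rho f_\hbar)\,\alpha/(i\hbar)$ only at leading order: for general $\widehat A$ the grade-$\geq 0$ tail $\mathcal D$ contains $\alpha$-legged pieces (components of $\omega_{ab}$ and $\widehat\Omega$ along $\alpha$), and their commutators with the non-abelian part of $\hat a$ produce $\hat s$-independent, $\alpha$-legged contributions to $\hat c$ at higher grades; killing the leading obstruction therefore does not by itself force $\hat c=0$, and the induction you defer to is genuinely needed (its consistency at each order must be extracted from the flatness condition~\nn{zipcurves}). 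Your fallback --- gauge to $\nabla_{\rm D}$ first --- is the paper's route, but transporting the symbol condition is more than ``tracking the induced filtered action on $\sigma$'': the boundary condition becomes $\sigma\big(\exp(\hat u)\,\hat b\,\exp(-\hat u)\big)=f_\hbar$, so $\sigma(\hat b)$ is \emph{not} a prescribed Reeb-invariant function but must itself be constructed order by order, with the abelian parts $b^{(k)}$ fixed at each stage by this twisted condition, and with a check at each stage that the resulting data stays $\uppsi$-independent so the Darboux recursion can proceed. That bookkeeping is the actual content of the paper's proof; your outline becomes a complete proof once either that twisted recursion or your deferred induction on $\Pi\hat c$ is written out.
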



\begin{proof}
By virtue of Theorem~\ref{gaugeflat}
we know that locally 
$$\nabla= \exp(\hat u) \circ \nabla_{\rm D} \circ \exp(-\hat u)\, ,$$
for some $\hat u\in \Gamma(Z\ltimes {\mathcal U}({\mathfrak {heis}}))$ and $\nabla_{\rm D}$ is given by Equation~\nn{dbc}. Therefore we begin by establishing that the equation 
\begin{equation}\label{pds}
\nabla_{\rm D} \hat b=0
\end{equation} 
has a solution
such that 
\begin{equation}
\label{boundary}
\sigma(\exp(\hat u) \, \hat b \, \exp(-\hat u))=f_\hbar\, ,
\end{equation}
because $\hat a = \exp(\hat u)$ $\hat b \exp(-\hat u)$  will then solve $\nabla \hat a= 0$ with the correct boundary condition $\sigma(\hat a)=f_\hbar$. (We deal with uniqueness at the end of this proof.)

We now work order by order in the grading ${\sf gr}$. Firstly, we must solve 
$$
0={\sf gr}_{-2}(\nabla_{\rm D} \hat b)=
\frac{db^{\sss(-2)}+ b^{\sss(-1)}_a E^a}{i\hbar}\, .
$$
From Equation~\nn{boundary} we have $b^{\sss(-2)}=a^{\sss(-2)}={\sf gr}_{-2} f_\hbar$, but by assumption ${\mathcal L}_\rho f_\hbar=0$ so Cartan's magic lemma gives
$\iota_\rho db^{\sss(-2)} = 0$,
whence $db^{\sss(-2)}\in {\rm span}\{E^a\}$. Hence we can solve the equation in the above display (uniquely) for $b^{\sss(-1)}_a$.
 
At the next order in the grading we must now solve
$$
0={\sf gr}_{-1}(\nabla_{\rm D} \hat b)=
\frac{db^{\sss(-1)}_a\hat s^a+ b^{\sss(0)}_{ab} E^a\hat s^b}{i\hbar}\, .
$$
By virtue of the Darboux coordinate system, $b_a^{\sss (-1)}$ cannot depend on $\uppsi$ so $\iota_\rho db_a^{\sss(-1)}=0$.
Hence the above display (uniquely) determines $b^{\sss(0)}_{ab}$ (and once again $\iota_\rho db_{ab}^{\sss(0)}=0$). The abelian term $-ib^{\sss(0)}$ is at this point not determined. However for that we
impose Equation~\nn{boundary} to the order $0$ in the grading, which now
determines $ b^{\sss(0)}$ in terms of $f_\hbar$ and other $\uppsi$-independent quantities. This establishes the pattern for an obvious recursion, which completes the existence part of this proof.

%
%

To show uniqueness, suppose $\hat a'$ also obeys $\nabla \hat a'=0$ such that $\sigma(\hat a'-\hat a)=0$. Now, let 
$$
\nabla = \frac{\alpha}{i\hbar}+ \frac{e^aJ_{ab}\hat s^b}{i\hbar} + \cdots\, .
$$
Then 
$$0={\sf gr}_{-2}\big(\nabla(\hat a'-\hat a)\big) = \frac{(a'^{\sss(-1)}_a-a^{\sss(-1)}_a)e^a}{i\hbar}\:\Leftrightarrow\:
a'^{\sss(-1)}_a=a^{\sss(-1)}_a\, .
$$
Indeed, the same pattern holds at all higher orders in the grading ${\sf gr}$, so that $\hat a'=\hat a$, as required. 
\end{proof}

\begin{remark}
Calling $\upxi^a=(\upchi^i,\uppi_i)$, the Darboux connection~\nn{dbc} obeys
$$
[\nabla_{\!\rm D},\hat s^a - \upxi^a]=0\, .
$$
So taking $\hat b$ equal to any polynomial ${\mathcal P}(\hat s^a - \upxi^a)$ solves the parallel section condition~\nn{pds}.
This in turn immediately solves the parallel section problem for $f_\hbar$ expressible as polynomial in Darboux coordinates.
Note however, that in general, 
replacing ${\mathcal P}$ by a formal power series in $\hat s^a-\upxi^a$, may not give a well defined formal power series
in Weyl ordered symbols of~$\hat s^a$.
(Quantum reordering terms potentially involve infinite, non-convergent, sums of the coefficients of the original power series.)
\end{remark}

\medskip

Let us denote by $\sigma^{-1}$ the map $C^\infty Z[[\hbar]]\cap{\rm ker}({\mathcal L}_\rho)\ni f_\hbar \mapsto \hat a$ as
defined by the above lemma.
Now consider a pair of solutions $f_\hbar,g_\hbar\in C^\infty Z[[\hbar]]$ to the classical equations of motion: 
$$
{\mathcal L}_\rho f_\hbar = 0 = {\mathcal L}_\rho g_\hbar\, .
$$
Then we have a pair of parallel sections $\sigma^{-1}(f_\hbar)$ and $\sigma^{-1}(g_\hbar)$ of $Z\ltimes {\mathcal U}(\frak{heis})$. These may be multiplied pointwise along $Z$ using the  operator product on fibers. 
Therefore, {\it a l\'a} Fedosov~\cite{Fed},  we may define a $\star$-multiplication of functions $f_\hbar$ and $g_\hbar$ by\footnote{Fedosov constructs a deformation of  the Moyal star product for Weyl ordered operators
in the Weyl algebra given the data of a symplectic manifold. Here we skip the Moyal star  and work directly with operators in the Weyl algebra.
}  
$$
f_\hbar \star g_\hbar = \sigma\big(\sigma^{-1}(f_\hbar)
 \sigma^{-1}(g_\hbar)\big)\, .
$$
This gives a contact analog of deformation quantization. Observe that it reduces the deformation problem to a gauge transformation. However, unlike Fedosov's work, this means that the above uniqueness proof for flat sections is local. It ought however be possible to improve this to a global statement and preliminary results indicate that this is the case; we reserve those results for a later publication, where we also plan to detail the  precise map between the above display and Fedosov's deformation formula for symplectic structures.

\section{Flat Sections and Dynamics}\label{dynamics}

As discussed in the previous section, solving for a flat connection $\nabla$ on the Hilbert bundle $Z\ltimes {\mathcal H}$ 
is analogous to finding an operator quantizing a classical Hamiltonian, while the parallel transport equation~\nn{Schroedinger} is the analog of the Schr\"odinger equation which controls quantum dynamics.
 We now turn our attention to solving the latter and computing correlators.
 
 \medskip
 
 \subsection{Parallel transport}
 
 Let us suppose we have prepared a state $|{\mathcal E}_{\rm i}\rangle \in {\mathcal H}_{z_{\rm i}}$ where~${\mathcal H}_{z_{\rm i}}$ is the Hilbert space associated with a point $z_{\rm i}\in Z$
 (one may think of $z\in Z$ as a generalized laboratory time coordinate). We would like to compute the probability of measuring
 a state  $|{\mathcal E}_{\rm f}\rangle \in {\mathcal H}_{z_{\rm f}}$ at some other 
 point $z_{\rm f}\in Z$. For that, observe that we can  parallel transport the ``initial'' state $|{\mathcal E}_{\rm i}\rangle$
 from the Hilbert space ${\mathcal H}_{z_{\rm i}}$ to any other
 Hilbert space ${\mathcal H}_{z}$
using a line operator
\begin{equation}
\label{lineop}
|{\mathcal E}(z)\rangle=\Big({\rm P}_\gamma
\exp\big(\!-\!\int_{z_{\rm i}}^z\!\widehat A\,\big)\Big)
|{\mathcal E}_{{\rm i}}\rangle \in {\mathcal H}_{z}\, ,
\end{equation}
where ${\rm P}_\gamma$ denotes path ordering and $\gamma$ is any path in $Z$ joining $z_{\rm i}$ and $z$. Since $\nabla = d+\hat A$, it follows that the section $\Psi(z)=|{\mathcal E}(z)\rangle$ of $Z\ltimes {\mathcal H}$ solves the Schr\"odinger equation~\nn{Schroedinger}.
Since the connection $\nabla$ is flat, if the fundamental group $\pi_1(Z)$ is trivial, this solution is independent of the choice of path $\gamma$ between $z_{\rm i}$ and $z$. When this is not the case, we must be more careful with the choice of Hilbert space fibers. We discuss this further below. Modulo this issue, the probability $P_{\rm f,i}$ of observing $|{\mathcal E}_{\rm f}\rangle \in {\mathcal H}_{z_{\rm f}}$ having prepared 
$|{\mathcal E}_{\rm i}\rangle \in {\mathcal H}_{z_{\rm i}}$ is
$$P_{\rm f,i}
=
\frac{\Big|
\langle {\mathcal E}_{\rm f}|
\Big({\rm P}_\gamma
\exp\big(\!-\!\int_{z_{\rm i}}^{z_{\rm f}}\!\widehat A\,\big)\Big)
|{\mathcal E}_{{\rm i}}\rangle\Big|^2}
{\langle {\mathcal E}_{\rm f}|
{\mathcal E}_{\rm f}\rangle
\, 
\langle {\mathcal E}_{\rm i}|
{\mathcal E}_{\rm i}\rangle
}\, .
$$
In~\cite{Herczeg} we showed how to extract  quantum mechanical  Wigner functions from correlators
\begin{equation}
\label{correlator}{\mathcal W}_{{\mathcal E}_{\rm f},{\mathcal E}_{\rm i}}(z_{\rm f},z_{\rm i}):=\langle {\mathcal E}_{\rm f}|
\Big({\rm P}_\gamma
\exp\big(\!-\!\int_{z_{\rm i}}^{z_{\rm f}}\!\widehat A\,\big)\Big)
|{\mathcal E}_{{\rm i}}\rangle\, .
\end{equation}
This correlator 
is gauge covariant. In particular, 
in a contractible local patch around the path $\gamma$, by virtue of Theorem~\ref{gaugeflat}, we can find a gauge transformation $\widehat U$ such that $\widehat U \nabla \widehat U^{-1}=\nabla_{\rm D}$, where the Darboux 
normal form is given in Equation~\nn{normal}. Hence the line operators for these two connections are 
related by
\begin{equation}\label{gcov}
\Big({\rm P}_\gamma
\exp\big(\!-\!\int_{z_{\rm i}}^{z_{\rm f}}\!\widehat A\,\big)\Big)=\widehat U(z_{\rm f})^{-1}\circ
\Big({\rm P}_\gamma
\exp\big(\!-\!\int_{z_{\rm i}}^z\!\widehat A_{\rm D}\,\big)\Big)\circ
\widehat U(z_{\rm i})\, .
\end{equation} 
 Inserting resolutions 
of unity  $\int dS |S\rangle\langle S|=1=\int dP |P\rangle\langle P|$ for $\mathcal H$
(where $\hat s^a=(\hat S^A,\hat P_A)$ and $\hat S_A |S\rangle =S_A |S\rangle$, $\hat P_A |P\rangle =P_A |P\rangle$)
in the  above identity, and putting this in the correlator~\nn{correlator}
gives~\footnote{\label{foot}Of course, one could equally well insert other resolutions of unity, for example, replacing $\int dP |P\rangle\langle P|$ with $\int dS' |S'\rangle\langle S'|$ is a propitious choice used in the next example.}
\begin{equation}\label{resolve}
{\mathcal W}_{{\mathcal E}_{\rm f},{\mathcal E}_{\rm i}}(z_{\rm f},z_{\rm i}):=\int dS dP\,  \langle  {\mathcal E}_{\rm f}|\hh\widehat U(z_{\rm f})^{-1}|P\rangle\, 
\langle P |
\Big({\rm P}_\gamma
\exp\big(\!-\!\int_{z_{\rm i}}^{z_{\rm f}}\!\widehat A_{\rm D}\,\big)\Big) |S\rangle\, 
\langle S|\hh \widehat U(z_{\rm i})
|{\mathcal E}_{{\rm i}}\rangle\, .
\end{equation}
Since the line operator for the connection $\widehat A_{\rm D}$ in the Darboux frame is essentially trivial (see directly below), knowledge of the gauge transformations $\widehat U$ 
determines the correlator. 

\begin{example}[The Darboux correlator]
Consider a pair of points $z_{\rm i}=(\uppi_{\rm i},\upchi_{\rm i},\uppsi_{\rm i})$ and $z_{\rm f}=(\uppi_{\rm f},\upchi_{\rm f},\uppsi_{\rm f})$ in the contact three-manifold $Z=({\mathbb R}^3,\uppi d\upchi-d\uppsi)$.
Since here we want to  study a line operator for a  flat connection $\nabla_{\rm D}$ on a trivial manifold, we may choose any path between these two points, so take 
$\gamma=  \gamma_\uppsi\cup\gamma_\uppi\cup\gamma_\upchi$
where
\begin{eqnarray*}
\gamma_\uppi&:=&\{
(1-t)\uppi_{\rm i}+t\uppi_{\rm f},\upchi_{\rm i},\uppsi_{\rm i})
\}\, ,
\\[1mm]
\gamma_\upchi&:=&
\{(
\uppi_{\rm f},(1-t)\upchi_{\rm i}+t\upchi_{\rm f},\uppsi_{\rm i})\}
\, ,\\[1mm]
\gamma_\uppsi&:=&
\{
(
\uppi_{\rm f},\upchi_{\rm f},(1-t)\uppsi_{\rm i}+t\uppsi_{\rm f})\}\, ,
\\[1mm]
\end{eqnarray*}
where $t\in[0,1]$.
Then, along these three paths the potential $\widehat A$ for the
Darboux connection (see Equation~\ref{DB3}) takes the form
$$
\widehat A_{\gamma_\uppi}=\frac1{i\hbar}\, dt(\uppi_{\rm f}-\uppi_{\rm i}) \hat S\, ,\quad
\widehat A_{\gamma_\upchi}=\frac1{i\hbar}\, dt(\upchi_{\rm f}-\upchi_{\rm i}) (\uppi_{\rm f}-\hat P)\, ,\quad
\widehat A_{\gamma_\uppsi}=-\frac1{i\hbar}\, dt(\uppsi_{\rm f}-\uppsi_{\rm i}) \, .
$$
Hence the correlator in Darboux frame is simply
$$
\langle P |
\Big({\rm P}_\gamma
\exp\big(\!-\!\int_{z_{\rm i}}^{z_{\rm f}}\!\widehat A_{\rm D}\,\big)\Big) |S\rangle=
\exp\Big(-
\frac{
(\upchi_{\rm f}-\upchi_{\rm i}) (\uppi_{\rm f}- P)+
(\uppi_{\rm f}-\uppi_{\rm i}) S
-\uppsi_{\rm f}+\uppsi_{\rm i}}
{i\hbar}
\Big)\, .
$$
The above result combined with Equation~\ref{resolve} indeed shows that  knowledge of the gauge transformation $\widehat U$ bringing a connection to its Darboux form determines correlators.
\end{example}

\subsection{Path integrals}

In general, one does not have access to the explicit diffeomorphism  bringing the contact form to its Darboux normal form (let alone the gauge transformation~$\widehat U$). Instead  correlators can be computed
 in terms of path integrals.
For that, per its definition, we split the path ordered exponential of the integrated potential $\widehat A$ 
into infinitesimal segments $dz^i$ along the path $\gamma$, and insert successive resolutions of unity. In particular, 
using that, for $dz^i$ small,
$$
\langle P | \exp(- \widehat A_i(\hat S,\hat P) dz^i)|S\rangle
\approx\exp\big(\frac i\hbar P_A S^A - A_{\rm N}(S,P)\big)\, ,
$$
where $A_{\rm N}(S,P)$ is the normal ordered symbol\footnote{To be precise, $\widehat A$ is recovered by writing $A(S,P)$ as a power series in $P$ and $S$ and then replacing monomials $P^k S^l$ by the operator $\hat P^k \hat S^l$.} of the operator $\hat A$,
we have the operator relation
$$
\exp(- \widehat A_i dz^i) \approx \int dS dP\,  |P\rangle
\exp\big(\frac i\hbar P_A S^A - A_{\rm N}(S,P)\big) \langle S | \, .
$$
Concatentating this expression along the path $\gamma$ gives the path integral formula for the correlator between states $|S_{\rm i}\rangle$ and $\langle P_{\rm f}|$
$$
{\mathcal W}_{P_{\rm f},S_{\rm i}}(z_{\rm f},z_{\rm i})=\int_{S(z_{\rm i})=S_{\rm i}}
^{P(z_{\rm f})=S_{\rm f}}
 [dP dS] \exp\Big(-
\frac{i}{\hbar} \int_\gamma \big(P_A dS^A +A_{\rm N}(S,P)\big)\Big)\, .
$$
In the above $\gamma$ is any path in $Z$ connecting $z_{\rm i}$ and $z_{\rm f}$. When $\nabla$ has trivial holonomy (otherwise see below), neither the correlator nor its path integral representation depends on this choice.
Notice that the path integration in the above formula is only performed fiberwise. We do not integrate over paths $\gamma$ in~$Z$, but rather paths in the total space ${\mathcal Z}=Z\ltimes{\mathbb R}^{2n}$ above the path $\gamma$ in $Z$. Indeed, calling $s^a:=(S^A,P_A)$ 
and writing $P_A dS^A = \frac 12 s^a J_{ab}  ds^b$ we see that the action appearing in the exponent of the above path integral is 
the quantum corrected analog of the   extended action of Equation~\nn{extS}(computing the operator~$\widehat A$ and its normal ordered symbol $A_{\rm N}$ will in general produce terms proportional to powers of~$\hbar$).

\subsection{Topology}

Finally, we discuss the case when the fundamental group~$\pi_1(Z)$ is non trivial\footnote{We owe the key idea of this section of modding out the Hilbert space fibers by the holonomy of $\nabla$  to Tudor Dimofte.}. The holonomy of the connection $\nabla$
may then be non-trivial, and the parallel transport solution~\nn{lineop}
to the Schr\"odinger equation can depend on the homotopy class of the path $\gamma$. {\it A priori} this seems to be a {\it bug} leading to loss of predictivity, however remembering that the topology of system can influence its quantum spectrum (consider a free particle in a box, for example), we have in fact hit upon a {\it feature}. Our quantization procedure is not complete until we impose that  the holonomy of the connection~$\nabla$ acts trivially on the Hilbert space fibers.
To explain this point better, as a running example consider the contact form
$$
\alpha = \uppi d \uptheta - d\uppsi\, ,
$$
on the manifold $Z=C\times {\mathbb R}$ where $C$ is a cylinder with periodic coordinate $\uptheta\sim \uptheta+2\pi$. Now let us study the quantizaton determined by the flat connection $\nabla = d+\widehat A$ where
$$
\widehat A = \frac{\alpha}{i\hbar} +
d \uppi \frac{S}{i\hbar}+
d\uptheta\Partial{S} 
 \, .
$$
Here we have picked some polarization for the Hilbert space fibers such that elements are given by wavefunctions $\psi(S)$.

Along the path $\gamma=\{\uptheta=\uptheta_o+\theta,\uppi=\uppi_o,\uppsi=\uppsi_o\, :\, \theta\in[0,2\pi)\}$, we have $\widehat A_\gamma
=\frac{1}{i\hbar}d\theta\big(\uppi_o -\frac \hbar i \Partial S\big)$. Hence the holonomy of $\nabla$ at basepoint $z_o=(\uptheta_o,\uppi_o,\uppsi_o)$ is
$$
{\rm hol}_{z_o}(\widehat A_\gamma) = \exp\Big(-\frac{2\pi i}\hbar\big(\uppi_o
-\frac \hbar i \Partial S\big)\Big)\, .
$$
Requiring that this holonomy 
acts trivially on the Hilbert space ${\mathcal H}$ over the base point $z_o\in Z$, we impose that elements $\psi_{z_o}(S)$ of that space obey
$$
\exp\Big(-\frac{2\pi i}\hbar\big(\uppi_o
-\frac \hbar i \Partial S\big)\Big)\, \psi_{z_o}(S)=\psi_{z_o}(S)\, .
$$
Hence
$$
\psi_{z_o}(S+2\pi)=e^{\frac{2\pi i\uppi_o}{\hbar}} \psi_{z_o}(S)\, .
$$
So, up to  a basepoint dependent phase, wavefunctions are periodic.
In effect, the classical topology of the contact base manifold $Z$ has enforced the desired boundary conditions on quantum wavefunctions.

\section{Discussion and Conclusions}\label{discuss}

Just as contact geometry reduces classical mechanics to a problem of contact topology (all dynamics is locally trivial by virtue of the contact Darboux theorem), the contact quantization we have presented does the same for quantum dynamics. Moreover, since our approach is completely generally covariant, even seemingly disparate systems can be related by appropriate choices of clocks. This gives a concrete setting for quantum cosmology-motivated studies of 
 the  ``clock ambiguity'' of  quantum dynamics~\cite{Albrecht,Warsaw}.

\medskip

Beyond providing a solid mathematical framework for philosophical questions of time and measurement in quantum mechanics, it is very interesting to 
probe to which extent the gauge
freedom characterized in Theorem~\ref{gaugeflat} can be used to solve or  further the study of concrete 
quantum mechanical systems. As discussed in Section~\ref{dynamics}, knowledge of the gauge transformation bringing the connection $\nabla$ to its Darboux form can be used to compute correlators, which begs the question whether methods---perturbative, exact when symmetries are present, or numerical---can be developed to calculate these transformations.

Along similar lines to the above remark, symmetries and integrability play a central {\it r\^ole} in the analysis
of quantum systems. Again contact geometry and its quantization ought be an ideal setting for analyzing quantum symmetries and relating them to contact topology. Preliminary results show that this is case, and we plan to report on such questions elsewhere.

\medskip

Lattice spin models and models with Fermi statistics are crucial for the description of physical systems. Here one needs to study supercontact structures (see~\cite{Manin, Schwarz, Bruce}); it is indeed not difficult to verify that our flat connection/quantizaton and parallel section/dynamics methodology can be applied directly in the supercontact setting; again we plan to report on this interesting direction in the near future. 

\medskip

In Section~\ref{Contactdef}
we showed how to relate contact quantization  to Fedosov's deformation quantization.
It would also be interesting to relate our approach to other quantization methods. In particular, it would be interesting to study the relation to Kontsevich's explicit deformation quantization formula for Poisson structures~\cite{Kontsevich} and its Cattaneo--Felder sigma model derivation~\cite{Cattaneo}. In addition, it would be interesting to study when we can go beyond formal deformation  quantization, perhaps along the lines of  the $A$-model approach of Gukov--Witten to quantization~\cite{Gukov}, or geometric quantization in general. Indeed, Fitzpatrick has made a  rigorous geometric quantization study of contact structures~\cite{Fitzpatrick} based on the proposal by Rajeev~\cite{Rajeev} to quantize Lagrange brackets (these are the contact analog of the Poisson bracket). Note also that earlier work by Kashiwara~\cite{Kashi}  studies sheaves of pseudodifferential operators over contact manifolds, and Yoshioka has performed a contact analog of Fedosov quantization where the base manifold is a symplectic manifold and the fibers carry a contact structure~\cite{Yoshioka}.

Finally, we mention that our construction of 
the connection~$\nabla$ is in spirit rather close to the Cartan normal connection in parabolic geometries, see~\cite{Slovak} for the general theory and~\cite{Fox} for its application to contact structures compatible with a projective structure. These geometric methods may also end up being directly relevant to quantum mechanics.

\section*{Acknowledgements}
This work was presented in part in lectures and further developed at the 38th Geometry and Physics Winter School in Srn\'i. A.W. and E.L. thank the organizers for this wonderful forum for interactions between geometry and physicists.
We also thank Andy Albrecht, Roberto Bonezzi,  Steve Carlip, James Conway, Olindo Corradini, Tudor Dimofte, Mike Eastwood, Rod Gover, Maxim Grigoriev, Jerry Kaminker,  Bruno Nachtergaele and Andrea Santi for discussions.
A.W. was supported in part by a 
Simons Foundation Collaboration Grant for Mathematicians ID 317562.

\end{document}